\documentclass[preprint,12pt]{elsarticle}
\linespread{1.6}
\usepackage{amsmath}
\usepackage{amsthm}
\usepackage{amssymb}
\usepackage{amsfonts}
\newcommand{\Levy}{\L\'{e}vy \,}
\newcommand{\pit}{\pi_t}
\newcommand{\pis}{\pi_s}

\newcommand{\mut}{\mu_t}
\newcommand{\sigt}{\sigma_t}

\newcommand{\U}{\mathbf{U}}

\newcommand{\R}{\mathbf{R}}
\newcommand{\E}{\mathbf{E}}
\newcommand{\V}{\mathbf{Var}}

\newcommand{\calB}{\mathcal{B}}

\newcommand{\calN}{\mathcal{N}}

\newcommand{\calF}{\mathcal{F}}

\newcommand{\calA}{\mathcal{A}}
\newcommand{\half}{\frac{1}{2}}
\newtheorem{thm}{Theorem}

\newtheorem{prop}{Proposition}

\newtheorem{remark}{Remark}
\newtheorem{defn}{Definition}

\numberwithin{equation}{section}
 \setlength{\oddsidemargin}{.5in}
 \setlength{\evensidemargin}{.5in}
 \setlength{\textwidth}{6.0in}
 \setlength{\textheight}{8.25in}
 \setlength{\footskip}{0.68in}
 \setlength{\topmargin}{-0.51in}

\journal{Arxiv}

\begin{document}

\begin{frontmatter}

\title{A Mispricing Model of Stocks Under Asymmetric Information }

\author[buckley]{W. Buckley}
\author[brown]{G. Brown}
\author[marshall]{M. Marshall}

\address[buckley]{CBA, Florida International University, Miami, FL
  33199, USA.\\ Email: \texttt{wbuck001@fiu.edu}}
\address[brown]{Corresponding author: Statistical Laboratory, CMS, Cambridge University, CB3 0WB UK.\\
  Email: \texttt{gob20@statslab.cam.ac.uk}; } \address[marshall]{
  School of Management, University of Texas at Dallas, Richardson,
  Texas 75080, USA.\\ Email:
  \texttt{mario.marshall@student.utdallas.edu} }

\begin{abstract}
We extend the theory of asymmetric information in mispricing models for stocks following geometric Brownian motion to constant relative risk averse investors. Mispricing follows a continuous mean--reverting Ornstein--Uhlenbeck process. Optimal portfolios and maximum expected log--linear utilities from terminal wealth for informed and uninformed investors are derived. We obtain analogous but more general results which nests those of Guasoni (2006) as a special case of the relative risk aversion approaching one. 

\end{abstract}

\begin{keyword}
Finance\sep  Utility theory\sep Investment Analysis\sep Optimization
\end{keyword}

\end{frontmatter}
\section{Introduction}
A simple mathematical model of two small investors on a financial market, one of which is better informed than  the other has attracted much attention in recent years. Their information is modelled by two different filtrations--the less informed investor has a sigma--field $ \calF^0$, corresponding to the natural evolution of the market, while the better informed investor has
the larger filtration $\calF^1$, which contains the information of the uninformed investor.
Understanding the link between asset mispricing and asymmetric information is a topic of ongoing interest in the finance literature.\\

Mispricing models for stocks under asymmetric information were first studied in Shiller (1981)
and Summers (1986) in a purely
deterministic setting. This was extended by Guasoni (2006) to the purely continuous random environment, 
where stock prices follow geometric Brownian motion (GBM) and utility is logarithmic. Fads/mispricing follow a continuous mean--reverting Ornstein--Uhlenbeck (O--U) process.  
There are two investors trading in the market--
the uninformed and informed investors.
Guasoni (2006) gives optimal portfolios and maximum expected logarithmic utilities, including
asymptotic utilities for both uninformed and informed investors.  He also gives  the excess asymptotic utility of the informed
investor. 

Buckley (2009) extends this theory to \Levy markets, where stock prices jump. Utility functions are assumed to be logarithmic. Jumps
are modelled by pure jump \Levy  processes, while the mispricing is represented by a purely continuous mean--reverting O--U
process driven by a standard Brownian motion, as in Guasoni (2006). 
The author obtains optimal portfolios and maximum expected logarithmic utilities for both the informed and uninformed 
investors, 
including asymptotic excess utility
  which is analogous to the result 
obtained in Guasoni (2006) in the purely continuous case. 
The random portfolios of the investors are linked to the symmetric,
purely deterministic optimal portfolios of L\'{e}vy diffusion markets having deterministic market coefficients.\\

In this paper, we generalize the theory of mispricing models of stocks under asymmetric information, where investors preference are from the power utility family.
We allow the stock price dynamic to move continuously as geometric Brownian motion, while the mispricing process remains as a continuous mean-reverting Ornstein--Uhlenbeck process. We obtain analogous but more general results which includes those of Guasoni (2006) as a special case of the risk aversion approaching one.  \\

The rest of the paper is organized as follows:
 Section 2 gives a brief literature review  and 
the model is introduced in Section 3. Filtrations are defined in Section 4, while price dynamics for both informed and uninformed investors are introduced in Section 5. Section 6 introduces CRRA (Constant Relative Risk Aversion) utility functions of the power class, while Section 7 presents portfolio and wealth processes. 
The main result is presented in Sections 8. We obtain optimal portfolios and log--linear maximum expected utilities for both the informed and uninformed 
investors, 
including asymptotic excess utility for the informed investor. 
 Section 9 concludes.
\section{Literature review}
\subsection{Continuous--Time Mispricing Models}
In this section, we give a brief literature review of asymmetric information in mispricing models in a purely continuous random market-- that is, in a 
market where stock prices and mispricing move continuously, without jumping. Discrete-time fads/mispricing models were first introduced in Shiller (1981) and Leroy and Porter (1981) as plausible alternatives to the efficient market/constant expected returns/discount rate assumption (cf Fama (1970)).


Studies by Flavin (1983), Kleidon (1986), and March and Merton (1986), criticize these findings based on the statistical validity of these volatility tests. However, other studies confirm the earlier findings of the variance bounds tests of Shiller (1981), and Leroy and Porter (1981). For example, West (1988) develops a stock market volatility
test that overcame these criticisms. West's inequality test prove that, if discount rates are constant, the variance of the change in the expected present discounted value of future dividends is larger when less information is used. He also finds that stock prices are too volatile to be the expected present discounted value of dividends when the discount rate is constant.\\

Campbell and Shiller (1987) find that when dividends and prices are non-stationary, they are co-integrated under the dividend discounted model, that is, there is a linear combination of the two that is stationary. Using cointegration and the VAR (vector autoregressive) framework, they also confirm the findings of Shiller (1981).

Given the failure of the discounted dividend model to explain stock price variations, some researchers introduced behavioural finance models as possible alternatives.  Summers (1986), and Cutler et al.(1990) introduce irrational/noise traders, and the slow response to changes in fundamentals. 
 DeLong et al. (1990),  suggest that noise trading in the market can increase price volatility, which impacts  the risk of investing in the stock market and the risk premium.  Campbell and Kyle (1993) also suggest that the existence of noise trading in the market can help explain the high volatility of stock prices. Daniel et al. (1998) develop a theory of mispricing based on investor overconfidence resulting from biased self-attribution of investment outcomes. \\

  Barberis et al. (1998) provide an explanation for over and under-reactions based on a learning model in which actual earnings follow a random walk but individuals believe that earnings follow a steady growth trend, or are mean reverting.  Odean (1998) provides a model where overconfident traders can cause markets to under-react to the information of rational traders,leading to positive serially correlated returns. \\

Wang (1993) gives a model of intertemporal/continuous--time asset pricing under asymmetric information. In his paper, investors have different information concerning the future growth rate of dividends, which satisfies a mean--reverting Ornstein--Uhlenbeck process. Informed investors know the future dividend growth rate, while uninformed investors do not. 
All  investors observe current dividend payments and stock prices. The growth rate of dividends determines the rate of appreciation of stock prices, and stock price changes provide signals about the future growth of dividends. Uninformed investors rationally extract information about the economy from prices, as well as dividends.
 Wang (1993) shows that asymmetry among investors can increase price volatility and negative autocorrelation in returns; that is, there is mean--reverting behaviour of stock prices. Thus, imperfect information of some investors can cause stock prices to be more volatile than in the symmetric case, when all investors are perfectly informed.\\

 Brunnermeier (2001) presents an extensive review of asset pricing models under asymmetric information mainly in the discrete setting. He shows how information affects trading activity, and that expected returns depend on the information set or filtration of the investor. These models show that past prices still carry valuable information, which can be exploited using technical/chart analysis, which uses part or all of past prices to predict future prices.\\

 Guasoni (2006) extends the model of Summers (1986) to the purely continuous random setting. He develops models of stock price evolution for two disjoint classes of investors; the informed and uninformed investors. The informed investor, indexed by $i=1$, observes both the fundamental and market values of the stock, while the so--called uninformed investor, indexed by $i=0$, observes market prices only. Both investors have filtrations or information banks $\calF^i, \, i \in \{0,1\}$ with 
$\calF^0 \subset \calF^1\subset \calF, $ where $\calF$ is a fixed sigma--algebra. 
The problem of the maximization of expected logarithmic utility from terminal wealth is solved for each investor, and an explicit 
formula for the asymptotic excess utility of the informed investor takes the form $ \frac{\lambda}{2}p\,(1-p)\,T$ where $T$ is the long run investment horizon, $\lambda$ is the reversion speed, and $q^2=1-p^2$ is the proportion of mispricing in the market. \\

Buckley (2009) extends this theory to include stock prices that jump. Utility functions are still assumed to be logarithmic. Jumps
are modelled by pure jump \Levy  processes, while mispricing is represented by a purely continuous mean--reverting O--U
process driven by a standard Brownian motion. 
The author obtains optimal portfolios and maximum expected logarithmic utilities for both the informed and uninformed 
investors, 
including asymptotic excess utility of the form $\frac{\tilde{\lambda}}{2}\,p\,(1-p)\,T$, which is analogous to the result 
obtained by Guasoni (2006) in the purely continuous case.  The random portfolios of the investors are linked to the symmetric,
purely deterministic optimal portfolios of  L\'{e}vy diffusion markets having deterministic market coefficients. 

\section{The Model}
The model consists of two assets;\,a riskless asset \textbf{B} called bond, bank account or money market, and a risky asset $S$ called stock. The bond earns a continuously compounded risk--free interest rate $r_t$, while the stock has total percentage appreciation rate or \textbf{expected returns} $\mut$, at time $t\in[0,T]$.  The stock is subject to volatility $ \sigma_t>0$. The market parameters are $\mut,\, r_t,\, \sigma_t,\, $, and are \textbf{deterministic} functions. \\

\noindent\textbf{Standing Assumptions :}\\
(1)  $T>0$, is the investment horizon; all transactions take place in  $[0, T].$\\
(2)  The market parameters $ r,\, \mu,\, \sigma^2$ are Lebesgue integrable deterministic functions.\\
(3)  The stock's \textbf{Sharpe ratio } or \textbf{market price of risk} $\theta$, is square integrable.\\
(4)  The risky asset $S$ lives on a probability space $(\mathbf{\Omega},\, \calF,\, \mathbf{P})$ on which is defined two independent 
standard Brownian motions $W=(W_t)_{t\geq 0}$ and $B=(B_t)_{t\geq0}$.  $\calF$ is an sigma--algebra of subsets of $\mathbf{\Omega}$, and 
$\mathbf{P}$ is the ``real--world" probability measure on  $\calF$. \\
(5)  Fads or mispricing are modelled by the a mean--reverting \textbf{Ornstein--Uhlenbeck} (O--U) process $U=(U_t)_{t\geq 0}$ with 
mean--reversion rate or speed $\lambda. $ \\
(6)  Utility are from the power class; i.e., linear functions of $ U(x)= x^\gamma$, where $\gamma < 1$.\\ 
(7)  Informed and uninformed investors are represented by the indices ``1", and ``0", respectively. \\
\noindent \textbf{The Price Dynamic}\\
The bond $\textbf{B}$ has price 
$ \mathbf{B}_t= \exp{\left(\int_0^t r_s ds\right)},$
while the stock has log return dynamic 
\begin{eqnarray}
d(\log{S_t}) &=& (\mut - \half \sigma^2_t)dt+ \sigma_t dY_t,\:\:\: t\in[0,T],\label{2.1}\\
Y_t &=& p\,W_t + q\,U_t,\:\:\: p^2 + q^2=1, \;p\geq 0, \, q\geq 0,\label{2.2}\\
dU_t &=& -\lambda U_t dt + dB_t, \:\:\: U_0=0,\,\,\,\lambda > 0.\label{2.3}
\end{eqnarray}
Applying It\^{o}'s transformation formula to (\ref{2.1}) gives percentage return for the stock:
\begin{equation}
\frac{dS_t}{S_t}= \mut dt+ \sigma_t dY_t ,\:\:\: t\in[0,T].\label{2.1a}
\end{equation}
Observe that $\mut$ is the expected percentage return on the stock,  while $\sigma_t dY_t$ is the \textbf{excess} percentage returns. The fads or mispricing process $U$ is a mean--reverting 
Ornstein--Uhlenbeck process with speed $\lambda$, which is the unique solution of the \textbf{Langevin stochastic differential 
equation:} (\ref{2.3})
with explicit solution
\begin{equation}\label{2.9}
U_t =  \int_0^t e^{-\lambda (t-s)} dB_s,\:\:\: t\in [0,T],
\end{equation}
with mean $\,\,\E[U_t]=0$, and variance
$ \mathbf{Var}[U_t]=\frac{1-e^{-2\lambda t}}{2\lambda}. $\\
If the speed $\lambda$ is close to zero, mean reversion is slow and there is a high likelihood of mispricing, while if $\lambda>>0$, the 
mispricing reverts rapidly, thereby reducing any advantages of mispricing. \textbf{$100 q^2$ \% is the percentage of mispricing in the market.}
Equation (\ref{2.1}) or equivalently, (\ref{2.1a}), has unique solution
\begin{equation}\label{2.5}
S_t= S_0\,\exp{\left( \int_0^t(\mu_s - \half \sigma^2_s)ds+ \int_0^t\sigma_s dY_s\right)},\:\:\:t\in [0,T].
\end{equation}
By imposing (\ref{2.3}) on (\ref{2.2}), we see that $Y$ is a combination of a martingale $W$, which represents permanent price shocks, 
and $U$ the mean--reverting O--U process, which represents temporary shocks. If $\lambda=0$ or $q=0$, (and $\mut$ and $\sigt$ are constants) 
we revert to the usual geometric Brownian motion (GBM) of Merton (1971).
\noindent 
\section{Filtrations or Information Flows of Investors}
\begin{defn}\label{def2.1}
Let $X=(X_t)_{t \geq 0}$ be a process defined on $(\mathbf{\Omega},\, \calF,\, \mathbf{P})$. Its natural filtration $ \calF^{X} 
=(\calF_t^{X})_{t \geq 0}$ is the sub--$\sigma$ algebra of $\calF$ generated by $X$, and is given by 
\begin{equation*}\label{2.10}
\calF_t^{X}\stackrel{\triangle}{=}\sigma(X_s: s\leq t)=\{X^{-1}(A)\subset \mathbf{\Omega}: A \in \calB(\R)\}.
\end{equation*}
\end{defn}
\noindent $\calF_t^{X}$ is the information generated by $X$ up to time $t$. 
\subsection{Augmentation}
We can make $\calF_t^{X}$ right--continuous and complete by augmenting it with $\calN$, the $\mathbf{P}$--null sets of $\calF$, given by
$\calN = \{ A \subset \mathbf{\Omega}: \exists\, B\in \calF,\:\: A\subset B,\:\: \mathbf{P}(B)=0\}.$
\noindent The augmented filtration of $X$ is $\sigma\,(\calF^{X}\vee \calN)$. $\calF^{X}$ is \textbf{complete} if it contains the $\mathbf{P}$--null sets of $\calF$, e.g.,  if $\calN \subset \calF_0^{X}$.
A filtration $(\calF_t)$ is \textbf{right continuous} if 
$\calF_{t+}=\calF_t, \:\:\:\:\:\mbox{where\,\, } \calF_{t+}=\bigcap_{s >t} \calF_s.$

\noindent In the sequel, we assume that all filtrations $(\calF_t)$ are right--continuous and complete. In this case, we say the filtration satisfies the \textbf{usual hypothesis} (Applebaum, 2004). Thus, $\calF^{X}=(\calF_t^{X})_{t \geq 0}$ will denote the complete right--continuous  filtration generated by $X$ on $(\mathbf{\Omega},\, \calF,\, \mathbf{P})$.
The informed investor observes the pair $(S,\,U)$, while the uninformed investor observes only the stock price $S$. 
\begin{defn}[\textbf{Filtrations of Investors}]\label{def2.13}
Let $\calF^1=(\calF^1_t)_{t \geq 0}$ and $\calF^0=(\calF^0_t)_{t \geq 0}$ be the filtrations generated by $(S,\,U)$ and $S$, 
respectively. That is, for each $t\in [0, T]$
\begin{equation*}
\calF^1_t \stackrel{\triangle}{=}\sigma\,(S_s,\,U_s : s\leq t)=\calF^{S,\,U}_t\,\,\,\mbox{and}\,\,\,\,
\calF^0_t \stackrel{\triangle}{=}\sigma\,(S_s : s \leq t)=\calF^{S}_t\label{2.13s}.
\end{equation*}
\end{defn}
\noindent $\calF^1$ and $\calF^0$ are the respective information flows of the informed and uninformed investors. 
Equivalently, since $W$ and $B$ generate $S$ and $U$ for the informed investor, while $Y$ generates $S$ for the uninformed investor, 
then
\begin{equation*}
\calF^1_t \stackrel{\triangle}{=}\sigma\,(W_s,\,B_s : s \leq t)=\calF^{S,\,U}_t\,\,\,\mbox{and}\,\,\,\,
\calF^0_t \stackrel{\triangle}{=}\sigma\,(Y_s:s \leq t)=\calF^{Y}_t\label{2.14s}.
\end{equation*}
Clearly
$\calF^0 \subset\calF^1 \subset\calF \:\:\:\Longleftrightarrow \calF^0_t\subset \calF^1_t$ for all $\:\:\; t\in [0,T].$
The market participants can be classified in accordance to their respective information flows. Those with access to $\calF^1$, are 
called informed investors--they observed both the fundamental and market prices of the risky asset. 
Those with access to $\calF^0$ only, 
are called uninformed investors--they observe market prices only. These uninformed investors know that there is mispricing in the market but 
cannot observe them directly.

\section{The Stock Price Dynamic for the Investors}
It follows from (\ref{2.1a}), that for  both investors the general percentage return for the stock has dynamic
$\frac{dS_t}{S_t}= \mut\, dt+ \sigma_t\, dY_t.$
\noindent We rewrite this dynamic for each investor.
\subsection{Price Dynamic for the Uninformed Investor}

Using the Hitsuda (1968) representation of Gaussian processes (see Cheridito,2003), Guasoni (2006) gives an $\calF^0$--Brownian motion $B^0$ and a random process $\upsilon^0$, such that $dY_t=dB^0_t + \upsilon_t^0dt  $. We now invoke a useful part of Theorem 2.1, Guasoni (2006). \\
Let $(\mathbf{\Omega},\, \calF,\, \mathbf{P})$ be a probability space on which independent Brownian motions $W$ and $B$ are defined. Let $\calF^0=(\calF^0_t)_{t \geq 0} \equiv (\calF^{Y}_t)_{t \geq 0}$ be the filtration generated by $Y$ satisfying the usual hypothesis. Let
\begin{equation}\label{2.24}
\gamma(s)=  \frac{1-p^2}{1+p \tanh( p \lambda s)}- 1.
\end{equation}
Then, we can construct an $\calF^0$-- Brownian motion $B^0=(B^0_t)$ on $(\mathbf{\Omega},\, \calF,\, \mathbf{P})$ such that in terms of \:\:$Y_s: s\leq t$, 
\begin{equation}\label{2.30}
Y_t = B^0_t + \int_0^t \upsilon^0_s ds,
\end{equation}
where $\upsilon_t^0$ is given by 
\begin{equation}\label{2.23}
\upsilon_t^0 \stackrel{\triangle}{=} -\lambda\, \int_0^t e^{-\lambda\, (t-s)} (1+\gamma(s))\,dB^0_s.
\end{equation}


\begin{remark}
$\gamma(u)$ is the solution of the equation:
$\gamma'(s)=\lambda (\gamma^2(s) - p^2),\:\:\: \gamma(0)= - p^2.$
\end{remark}
\subsection{Price Dynamic for the Investors}
The stock price dynamic of each investor, relative to his or her filtration, now follows.
\begin{prop}\label{pro1}
Let $i\in\{0,\,1 \}$. Under $\calF^i$, the price dynamic of the $i$--th investor is 
\begin{equation}\label{2.21.0}
\frac{dS_t}{S_t}=\mut^i dt + \sigma_t dB^i_t,
\end{equation}
with price
\begin{equation}\label{2.26}
S_t= S_0\,\exp{\left( \int_0^t(\mu^i_s - \half \sigma^2_s)ds+ \int_0^t\sigma_s dB^i_s\right)},\:\:\:t\in [0,T],
\end{equation}
where the stock has random drift
\begin{equation}\label{2.22}
\mut^i\stackrel{\triangle}{=}\mut+ \upsilon^i_t \sigma_t,
\end{equation}
 with  $B^1_t = p\,W_t + q\,B_t$, \,\,$\upsilon^1_t= -q\lambda\,U_t$, and $\upsilon_t^0$,\,\, $ B_t^0$ are defined in Equations (\ref{2.30}--\ref{2.23}).
\end{prop}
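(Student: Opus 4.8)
The plan is to start from the investor--independent percentage return (\ref{2.1a}), namely $\frac{dS_t}{S_t}=\mut\,dt+\sigt\,dY_t$, and for each $i$ to re--express the increment $\sigt\,dY_t$ so that its martingale part is driven by an $\calF^i$--Brownian motion, pushing the remaining finite--variation piece into the drift. Once I can write $dY_t=dB^i_t+\upsilon^i_t\,dt$ for an $\calF^i$--Brownian motion $B^i$, I obtain $\frac{dS_t}{S_t}=(\mut+\sigt\upsilon^i_t)\,dt+\sigt\,dB^i_t$, which is exactly (\ref{2.21.0}) with $\mut^i=\mut+\upsilon^i_t\sigt$ as in (\ref{2.22}). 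The two filtrations call for different arguments: the informed case is a direct computation combined with L\'{e}vy's characterization, while the uninformed case is read off from the Hitsuda representation already recorded in (\ref{2.30})--(\ref{2.23}).

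For the informed investor ($i=1$) I would differentiate $Y_t=p\,W_t+q\,U_t$ and substitute the Langevin equation (\ref{2.3}) for $dU_t$, which gives $dY_t=p\,dW_t+q\,dB_t-q\lambda U_t\,dt$. Setting $B^1_t=p\,W_t+q\,B_t$ so that $dB^1_t=p\,dW_t+q\,dB_t$, I then read off $\upsilon^1_t=-q\lambda U_t$. It remains to verify that $B^1$ is an $\calF^1$--Brownian motion. Since $\calF^1$ is generated by $(W,B)$, the process $B^1$ is $\calF^1$--adapted, and as a linear combination of the independent $\calF^1$--martingales $W$ and $B$ it is a continuous $\calF^1$--local martingale; independence forces the cross variation to vanish, so its quadratic variation is $\langle B^1\rangle_t=p^2 t+q^2 t=t$ by $p^2+q^2=1$. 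L\'{e}vy's characterization then yields that $B^1$ is a standard $\calF^1$--Brownian motion, completing this case.

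For the uninformed investor ($i=0$) the hard analytic work has already been done for us: the cited portion of Theorem 2.1 of Guasoni (2006), reproduced in (\ref{2.30}), asserts that $Y_t=B^0_t+\int_0^t\upsilon^0_s\,ds$ for the $\calF^0$--Brownian motion $B^0$ and the drift $\upsilon^0_t$ of (\ref{2.23}). Hence $dY_t=dB^0_t+\upsilon^0_t\,dt$, and substitution into (\ref{2.1a}) immediately produces (\ref{2.21.0}) with $\mut^0=\mut+\upsilon^0_t\sigt$. In both cases the closed--form price (\ref{2.26}) follows by applying It\^{o}'s formula to $\log S_t$: since the diffusion coefficient of $S$ is $\sigt$, the quadratic--variation correction is $\half\sigt^2\,dt$, so that $d(\log S_t)=(\mut^i-\half\sigt^2)\,dt+\sigt\,dB^i_t$, and integrating from $0$ to $t$ and exponentiating gives the stated formula. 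The only genuinely delicate point is the $\calF^1$--measurability and martingale property of $B^1$, i.e.\ that enlarging the filtration by the independent noise $B$ does not disturb the Brownian character of $p\,W+q\,B$; this is precisely what L\'{e}vy's characterization secures, and the uninformed case is subtle only through the Guasoni representation, which we are entitled to assume.
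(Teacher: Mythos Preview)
Your proof is correct and follows essentially the same route as the paper: for $i=1$ you expand $dY_t$ via the Langevin equation to isolate $dB^1_t=p\,dW_t+q\,dB_t$ and $\upsilon^1_t=-q\lambda U_t$, and for $i=0$ you invoke the Hitsuda/Guasoni representation (\ref{2.30}) directly, then substitute into (\ref{2.1a}) and integrate. The only difference is that you supply the L\'{e}vy--characterization argument for why $B^1$ is an $\calF^1$--Brownian motion, whereas the paper simply asserts this; your addition is a welcome bit of extra rigor but not a departure in method.
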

\begin{proof}
See Appendix A for proof.
\end{proof}
\begin{remark}
Observe from (\ref{2.21.0}) that for the $i$--th investor, that is, relative to the filtration $\calF^i$, the drift of the stock price is $\mut^i $, is \textbf{random}, while they both share a common deterministic volatility $\sigma_t $, which is the volatility of the original driving process given by (\ref{2.1a}), which has a deterministic drift $\mut$. Moreover, $\E \mut^i=\mut$
\end{remark}
We state without proof, a useful result for $\upsilon^i_t$ that will be required in the sequel.
\begin{prop}\label{pro2}
Let $i\in\{0,1\}$, $p \in [0,\,1]$, $p^2+q^2=1$ and  $t\in [0,\,T]$.\\
Let $\upsilon^0_t = -\lambda \int_0^t e^{-\lambda (t-s)} (1+\gamma(s))dB^0_s$ and  $\upsilon^1_t = -\lambda \,q\,U_t$. Then \\
$(0)$\hspace{.1in}  $\E[\upsilon^i_t]=0$. \\
$(1)$\hspace{.1in} $ \E[\upsilon^0_t]^2=\lambda^2 \int_0^t e^{-2\lambda (t-s)} (1+\gamma(s))^2 ds\le \half \lambda \,q^2.$\\  
$(2)$\hspace{.1in} $ \E[\upsilon^1_t]^2= \frac{\lambda}{2}\,(1-p^2)\,(1- e^{-2\lambda t}) \le \half \lambda \,q^2.  $\\ 
$(3)$\hspace{.1in} $ \E[\upsilon^i_t]^2= \frac{\lambda}{2}(1- p)(1+(-1)^{i+1}p) ,\:\:\:\mbox{ as } t\rightarrow \infty .$\\
$(4)$\hspace{.1in} $\int_0^T \E[\upsilon^i_t]^2 dt \simeq \frac{\lambda}{2}\,(1-p)(1+(-1)^{i+1}p)\,T, \:\:\: \mbox{as }\:\: T\longrightarrow \infty.$\\
$(5)$\hspace{.1in} As $T\longrightarrow \infty$, the asymptotic excess cumulative variance of the $v^i$s is
$$ \int_0^T \E[\upsilon^1_t]^2 dt - \int_0^T \E[\upsilon^0_t]^2 dt\simeq \lambda\,p\,(1-p)\,T\leq \frac{\lambda}{4}\,T.$$
\end{prop}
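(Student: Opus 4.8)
The plan is to treat the statement in two blocks: parts $(0)$--$(2)$ are exact second--moment computations that follow from the It\^o isometry, and parts $(3)$--$(5)$ are the large--time asymptotics obtained by passing to the limit in those exact expressions. Throughout I would exploit the closed form
$$1+\gamma(s)=\frac{1-p^2}{1+p\tanh(p\lambda s)}=\frac{q^2}{1+p\tanh(p\lambda s)},$$
which rewrites the integrand appearing in $\upsilon^0_t$ in a way that makes both the required bound and the required limit transparent.

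For part $(0)$, both $\upsilon^0_t$ and $U_t$ (hence $\upsilon^1_t=-\lambda q U_t$) are It\^o integrals of deterministic integrands against a Brownian motion, so each has zero expectation. For part $(1)$, I would apply the It\^o isometry to $\upsilon^0_t=-\lambda\int_0^t e^{-\lambda(t-s)}(1+\gamma(s))\,dB^0_s$ to obtain the stated equality $\E[\upsilon^0_t]^2=\lambda^2\int_0^t e^{-2\lambda(t-s)}(1+\gamma(s))^2\,ds$. The inequality then comes from noting that since $\tanh(p\lambda s)\ge 0$ and $p\ge 0$ we have $0\le 1+\gamma(s)\le q^2\le 1$, whence $(1+\gamma(s))^2\le q^2$; pulling this constant out and using $\lambda^2\int_0^t e^{-2\lambda(t-s)}\,ds=\half\lambda(1-e^{-2\lambda t})\le\half\lambda$ gives the bound $\half\lambda q^2$. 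Part $(2)$ is the same computation for $\upsilon^1_t$: $\E[\upsilon^1_t]^2=\lambda^2 q^2\,\V[U_t]=\lambda^2 q^2\cdot\frac{1-e^{-2\lambda t}}{2\lambda}=\frac{\lambda}{2}(1-p^2)(1-e^{-2\lambda t})$, and $1-e^{-2\lambda t}\le 1$ yields the bound.

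For the asymptotics in part $(3)$, the case $i=1$ is immediate from $(2)$ upon letting $e^{-2\lambda t}\to 0$, giving $\frac{\lambda}{2}(1-p^2)=\frac{\lambda}{2}(1-p)(1+p)$. The case $i=0$ is the crux: after the substitution $u=t-s$ the integral in $(1)$ becomes $\lambda^2\int_0^t e^{-2\lambda u}(1+\gamma(t-u))^2\,du$, and since $\tanh(p\lambda(t-u))\to 1$ as $t\to\infty$ for each fixed $u$ (when $p>0$), the integrand converges pointwise to $\lambda^2 e^{-2\lambda u}(1-p)^2$. Dominating by the integrable $\lambda^2 e^{-2\lambda u}q^4$ lets me invoke dominated convergence to get $(1-p)^2\lambda^2\int_0^\infty e^{-2\lambda u}\,du=\frac{\lambda}{2}(1-p)^2=\frac{\lambda}{2}(1-p)(1-p)$, matching the formula with $(-1)^{i+1}=-1$. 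Part $(4)$ then follows because if $\E[\upsilon^i_t]^2\to L$ then $\frac1T\int_0^T\E[\upsilon^i_t]^2\,dt\to L$ (Ces\`aro averaging), so $\int_0^T\E[\upsilon^i_t]^2\,dt\simeq LT$. Finally, part $(5)$ is pure algebra: subtracting the two asymptotics from $(4)$ gives $\frac{\lambda}{2}[(1-p^2)-(1-p)^2]T=\lambda p(1-p)T$, and the elementary inequality $p(1-p)\le\frac{1}{4}$ on $[0,1]$ produces the bound $\frac{\lambda}{4}T$.

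The main obstacle is the $i=0$ asymptotic in part $(3)$: unlike the other parts it is not a finite closed form but a genuine limit of a convolution integral, so I must justify the interchange of limit and integral and handle the boundary case $p=0$ (where $\gamma\equiv 0$ and the formula must still reduce to $\frac{\lambda}{2}$). Care is also needed to confirm that the pointwise limit $1+\gamma(t-u)\to 1-p$ together with the uniform domination by $\lambda^2 e^{-2\lambda u}q^4$ is legitimate; once that convolution limit is secured, parts $(4)$ and $(5)$ are routine.
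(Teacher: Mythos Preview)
Your proposal is correct. The paper explicitly states this proposition \emph{without proof} (``We state without proof, a useful result for $\upsilon^i_t$ that will be required in the sequel''), so there is no authors' argument to compare against; your use of the It\^o isometry for $(0)$--$(2)$, the dominated--convergence passage via $1+\gamma(s)=q^2/(1+p\tanh(p\lambda s))\to 1-p$ for the $i=0$ limit in $(3)$, Ces\`aro averaging for $(4)$, and the algebra for $(5)$ fill that gap cleanly.
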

\begin{remark}
$(0)$ Observe that for the $i$--th investor, the random process $\upsilon^i_t$ has a mean of 0, and is simply the number of standard deviations (volatility units) of its random drift ($\mut^i $) given in its price dynamic from its deterministic mean ($\mut$). That is, $\E \mut^i=\mut$\\
$(1)$
Moreover, $\upsilon^i_t$ is an integrated Brownian motion, explicitly given by
\begin{eqnarray}
\upsilon^0_t &=& -\lambda \int_0^t e^{-\lambda (t-s)} (1+\gamma(s))dB^0_s, \nonumber\\
\upsilon^1_t & =& -\lambda \,q\,U_t= -\lambda \,q\,\int_0^t e^{-\lambda (t-s)} dB_s,\:\:\: t\in [0,T].\label{2.2215} 
\end{eqnarray}
$(2)$ The variance of the process $\upsilon^i_t$ is $ \E[\upsilon^i_t]^2$, a monotonically increasing function of $t$, the mean-reversion speed $\lambda$, and the mispricing level $q^2$. The variance is bounded by half of the reversion speed and goes to zero if any of these quantities approach zero. \\
$(3)$ The cumulative variances are also bounded above by the mean-reversion speed, the mispricing level $q^2$, and the investment horizon $T$.

\end{remark}

\section{Power Utility Functions}
We assume that each investor has a utility function $\U:(0,\,\infty)\rightarrow \R$ for wealth that  satisfies the Inada condition, i.e., 
 it is strictly increasing, strictly concave, continuously differentiable, with 
$$ \U'(0)=\lim_{w\downarrow 0}\U'(w)=+ \infty, \hspace{.5in}  \U'(\infty)=\lim_{w\rightarrow \infty}\U'(w)=0.$$
\noindent $\U_0(w)=\log{w}$, the logarithmic utility and $\U_\gamma(w)=\frac{w^\gamma}{\gamma},\:\: \gamma<1$, the power utility, satisfy 
this condition. In the sequel, all utility functions are assumed to be power, with constant relative risk aversion (RRA), $1-\gamma$. In particular, it is easy to show that 
$\U_\gamma(w)=\frac{w^\gamma -1}{\gamma}\longrightarrow \,U_0(w)=\log{w},\;\;\mbox{when}\, \gamma\longrightarrow0.$ \label{2.36aa}
\section{Portfolio and Wealth Processes of Investors}
\begin{defn}[\textbf{Portfolio Process}]
A portfolio process $\pi: [0,\,T]\times \Omega\rightarrow \R$, is an $\calF= (\calF_t)_{t\geq 0}$--adapted process satisfying
$\int_0^T (\pit \sigt)^2 dt < \infty,\,\,\, \mbox{ almost surely.}$
\end{defn}
\noindent Although $\pi$ is a function of $(t, \omega)$,  in the sequel we keep $ \Omega$ in the background, and assume that $\pi$ is 
primarily a function of time $t$, where $\pit$ is the proportion of an investor's wealth invested in the stock at time $t$.
 The remainder
$1-\pit$, is invested in the bond or money market.  $\pi$ is not restricted to [0,\,1] for the purely continuous model--we allow short--selling ($\pi<0$) and borrowing ($\pi>1$) at the risk--free rate.
\begin{defn}[\textbf{Self--financing}]
A portfolio process $\pi$ is called self--financing if 
\begin{equation}\label{2.37a}
dV_t= (1-\pit) r_t V_t\,dt + \pit V_t \frac{dS_t}{S_t},
\end{equation}
where $V_t$ is the wealth or value of the holding of stock and bond at time $t \in [0,T]$.
\end{defn}
\noindent Thus, for self--financing portfolios, the change in the wealth is due only to the change in prices, \textbf{provided} that no money is 
brought in or taken out by the investor.  \\

\noindent \textbf{The Wealth Process:}
For a given non--random initial wealth $x>0$, let $ V^{x,\,\pi}\equiv V^{\pi} \equiv V =(V_t)_{t\geq 0}$ denote the wealth process 
corresponding to a self--financing portfolio $\pi$ with $V_0=x$, and satisfying the stochastic differential equation (\ref{2.37a}).

\noindent We now present wealth dynamics for each investor.
\begin{thm}\label{th2.3}
Let $i\in\{0,1\}$ and let $r_t$ be the risk--free interest rate. Let $\pi^i$ and $V^i$ be the respective portfolio and wealth processes for the $i$--th investor as a result of investing in the stock, with Sharpe ratio
\begin{equation}\label{2.50}
\theta^i_t=\frac{\mut^i - r_t}{\sigma_t},\:\:\: \mut^i= \mut + \upsilon^i_t\, \sigma_t,\:\: t\in[0,\,T],
\end{equation}
and percentage return dynamic driven by an $\calF^i$--adapted Brownian motion $B^i$, given by (\ref{2.21.0}).
Then the wealth process $V^i=V^{i,\,\pi}$ corresponding to $\pi$, and initial wealth $\,x>0$, has percentage return dynamic
\begin{equation}\label{2.52}
\frac {dV^i_t}{V^i_t}=(r_t +\pit^i \sigma_t\theta^i_t)\,dt + \pit^i \,\sigma_t dB^i_t ,
\end{equation}
with unique discounted wealth process
\begin{equation}\label{2.53}
\widetilde{V}^i_t=x\, \exp \left( \int_0^t (\pis^i \sigma_s \theta^i_s -\half (\pis^i)^2 \sigma^2_s)ds + \int_0^t \pis^i \sigma_s dB^i_s 
\right)
\end{equation}
and under power utility $\U(w)=w^\gamma= e^{\gamma\,\log w},\:\:w>0,\, \gamma<1$,
\begin{equation}\label{2.55aa}
\U({\widetilde{V}^i_t})= e^{\gamma\,\log{\widetilde{V}^i_t}} = e^{\gamma\,\log{x} + \gamma\,H^{i}(t)} 
\end{equation}
where 
\begin{equation}\label{2.55ab}
H^{i}(t)=\half \int_0^t (2 \theta_s^i \pis^i \sigma_s - ( \pis^i \sigma_s)^2 )ds + \int_0^t \pis^i 
\sigma_s dB^i_s.
\end{equation}
\end{thm}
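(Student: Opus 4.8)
The plan is to start from the self--financing wealth equation (\ref{2.37a}), substitute the investor--specific price dynamic of Proposition \ref{pro1}, reduce the resulting linear SDE to a geometric one by discounting, and finally integrate its logarithm by means of It\^o's formula. I would first establish (\ref{2.52}). Inserting the $\calF^i$--return dynamic (\ref{2.21.0}), $\frac{dS_t}{S_t}=\mut^i\,dt+\sigt\,dB^i_t$, into the self--financing identity (\ref{2.37a}) and dividing by $V^i_t$, then collecting the two drift contributions, gives
$$\frac{dV^i_t}{V^i_t}=\big(r_t+\pit^i(\mut^i-r_t)\big)\,dt+\pit^i\sigt\,dB^i_t.$$
Replacing $\mut^i-r_t$ by $\sigt\,\theta^i_t$ via the Sharpe--ratio definition (\ref{2.50}) produces (\ref{2.52}) immediately.

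Next, for the discounted wealth $\widetilde V^i_t=V^i_t\exp(-\int_0^t r_s\,ds)$ I would apply the It\^o product rule. Because the discount factor is of finite variation and deterministic, the $r_t$ drift cancels exactly and one is left with the purely geometric equation $\frac{d\widetilde V^i_t}{\widetilde V^i_t}=\pit^i\sigt\theta^i_t\,dt+\pit^i\sigt\,dB^i_t$. Applying It\^o's formula to $\log\widetilde V^i_t$ then introduces the usual correction $-\half(\pit^i\sigt)^2\,dt$ coming from the quadratic variation of the martingale part, so that $d\log\widetilde V^i_t=(\pit^i\sigt\theta^i_t-\half(\pit^i)^2\sigt^2)\,dt+\pit^i\sigt\,dB^i_t$. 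Integrating from $0$ to $t$ with $\widetilde V^i_0=x$ and exponentiating yields (\ref{2.53}). Uniqueness is automatic: (\ref{2.52}) is linear in $V^i$, so its solution is the Dol\'eans--Dade exponential, which is the unique strong solution under the standing integrability condition $\int_0^T(\pit\sigt)^2\,dt<\infty$ that makes the stochastic integral well defined.

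For (\ref{2.55aa})--(\ref{2.55ab}) I would simply read off from the exponent obtained above that $\log\widetilde V^i_t=\log x+H^{i}(t)$, where $H^{i}(t)$ coincides with (\ref{2.55ab}) once the drift product term is written as $\half\int_0^t 2\theta^i_s\pis^i\sigs\,ds$ and $(\pis^i)^2\sigs^2$ as $(\pis^i\sigs)^2$. Substituting this into the power--utility definition $\U(w)=w^\gamma=e^{\gamma\log w}$ gives $\U(\widetilde V^i_t)=e^{\gamma\log\widetilde V^i_t}=e^{\gamma\log x+\gamma H^i(t)}$, which is (\ref{2.55aa}).

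The calculation is essentially routine It\^o calculus, and the only points needing care are the cancellation of the interest--rate drift under discounting and the correct sign and coefficient of the It\^o correction term. The one genuine subtlety I expect to flag is that the drift $\mut^i$, and hence $\theta^i_t$, is \emph{random} and only $\calF^i$--adapted; so I would emphasize that, by Proposition \ref{pro1}, $B^i$ is an $\calF^i$--Brownian motion and $\pi^i$ is $\calF^i$--adapted, which is precisely what legitimises each stochastic integral and the application of It\^o's formula relative to the $i$--th investor's own filtration.
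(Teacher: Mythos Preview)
Your proposal is correct and follows essentially the same route as the paper: start from the self--financing identity, insert the investor--specific price dynamic, identify the Sharpe--ratio drift, solve the resulting geometric SDE via It\^o's formula, discount, and then substitute $\log\widetilde V^i_t=\log x+H^i(t)$ into the power utility. The only cosmetic difference is that the paper first writes the closed--form exponential for the undiscounted wealth and then discounts, whereas you discount first and then take the logarithm; the computations are otherwise identical, and your remarks on adaptedness and uniqueness via the Dol\'eans--Dade exponential are welcome but not strictly required.
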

\begin{proof} 
See appendix A  for proof. \qedhere
\end{proof}
\section{ Utility Maximization from Terminal Wealth}
Each investor is assumed to be rational; that is, the investor is a utility maximizer. Thus, both informed and 
uninformed 
investors  maximize their respective expected utility from terminal wealth $V_T$, where $T$ is the investment horizon. 
The terminal wealth $V_T$ is represented by its discounted value $\widetilde{V}_T$. We then maximize $\E_{\pi} \U(\widetilde{V}_T)$ where $\pi$ 
is selected from an admissible set $\calA(x)$.
\begin{defn}[\textbf{Admissible Portfolio}]
A self--financing portfolio $\pi$ is admissible if $V^{\pi}_t$ is lower bounded for all $t\in [0,\,T].$
That is, there exists $ K> -\infty$ such that almost surely, $V^{\pi}_t > K $ for all $t\in [0,T].$-(cf Oksendal 2005, page 265)
\end{defn}
\noindent Since $x>0$, we assume that $V_t^{\pi} >0$ and therefore $ \widetilde{V}_t^{\pi}>0$ for all $t\in[0,\,T]$. Thus equivalently, $\pi$ is 
admissible if  $ \widetilde{V}_t^{\pi}>0$ for all $t\in[0,T]$.
 Karatzas and Shreve (1991), define an admissible portfolio in terms of the utility function $\U(x)$ by the prescription:
$\E[\U(\widetilde{V}_t^{\pi})]^{-} <  \infty,$
where $a^{-}= \max \{0,-a\}$. Either definition will suffice!
\begin{defn} [ \textbf{Admissible set}]
Let $x>0$ be the initial wealth of the investor. The admissible set $ \calA(x)$ of this investor is defined by 
\begin{equation*}\label{2.57}
\calA(x)=\left\{\pi: \pi-admissible, S-integrable, \calF-predictable \right\}.
\end{equation*}
\end{defn}
\noindent  $\pi$ is $\calF$--predictable if it is measurable relative to the predictable sigma--algebra on $[0,\,T]\times \Omega$
(see Protter(2004) for details). 
\subsection{Utility Maximization Problem and Optimal Portfolios}

For a given utility function $U(\cdot)$ and initial wealth $x>0$, we maximize the expected utility from (discounted) terminal wealth 
$\E[\U(\widetilde{V}_t^{\pi})]$, over the investors admissible set $\calA(x)$. The value function for this problem is 
$u(x)\stackrel{\triangle}{=} \sup_{\pi \in \calA(x)} \E[\U(\widetilde{V}_t^{\pi})]$,
where it is assumed that $u(x)<\infty$ for all $x>0$. That is, there is an optimal portfolio $\pi^* \in \calA(x)$ such that $u(x)=\E[\U(\widetilde{V}_t^{\pi^*})]$. In other words, suppressing superscripts for investors,
$\pi_t^*= \arg \sup\E[\U(\widetilde{V}_t^{\pi^*})].$
Let $i \in \{0,1\}$. For the $i$--th investor, define an admissible set:
$\calA^i(x)=\left\{\pi: \widetilde{V}_t^{\pi}> 0, \:a.s., S-integrable, \calF^i-predictable \right\}$
and a utility maximization problem: $ \sup_{\pi} \left\{\E [\U( \widetilde{V}_t^{\pi})]:  \pi \in \calA^i(x)\right\},$
with respective value functions:
\begin{equation}\label{2.63}
u^i(x)=\sup_{\pi}\left\{\E [\U( \widetilde{V}_t^{\pi})]:  \pi \in \calA^i(x)\right\}
=\E \U( \widetilde{V}_t^{\pi^{*,\,i}}).
\end{equation}

The logarithmic utility function is used by Guasoni (2006) so that an explicit solution of (\ref{2.63}) is obtained (cf  Amendinger et al.(1998), Imkeller and Ankirchner (2006), Karatzas and Pikovsky (1996)).  
We now give a slightly modified version of Guasoni's solution to (\ref{2.63}) using the notation developed in preceding sections.

\begin{thm}[ Guasoni (2006), Theorem 3.1]\label{th2.4}
Let $i\in\{0,1\}$,  $u^i(x)$ be the value function, and $\pi^{*,\,i}$ the optimal portfolio for the $i$-th investor that solves (\ref{2.63}), where utility is assumed to be logarithmic and the risk--free interest rate is $r=0$.\\
$(1)$\hspace{.1in} The optimal portfolio for the $i$--th investor is:
\begin{equation}\label{2.64}
\pit^{*,\,i}=\frac{\theta^i_t}{\sigma_t}=\frac{\mut^i}{\sigma^2_t}=\frac{\mut+\upsilon^i_t \sigma_t}{\sigma^2_t}, \;\;\;t\in[0,T],
\end{equation} 
where $\theta^i$ is the Sharpe ratio of the stock for the $i$--th investor.\\
$(2)$ \hspace{.1in} The maximum expected utility from terminal wealth for the $i$--th investor is 
\begin{eqnarray}
u^i(x) &= &\log{x}+ \half \E \int_0^T (\theta^i_t)^2 dt = \log{x}+\half \E \int_0^T \left(\frac{\mut+\upsilon^i_t \sigma_t}{\sigma_t}\right)^2 dt\label{2.65}
\end{eqnarray}
$(3)$\hspace{.1in} As $T\longrightarrow \infty$, the asymptotic maximum expected utility is 
\begin{equation}\label{2.67}
u^i_{\infty}(x)\simeq \log{x} + \half \int_0^T \frac{\mut^2}{\sigma^2_t}\, dt + \frac{\lambda}{4}(1-p)
(1+(-1)^{i+1}p)\,T.
\end{equation}
$(4)$\hspace{.1in} The excess asymptotic maximum expected utility of the informed investor is 
\begin{equation}\label{2.68}
u^1_{\infty}(x)- u^0_{\infty}(x) \simeq \frac{\lambda}{2}\,p(1-p)\,T.
\end{equation}
\end{thm}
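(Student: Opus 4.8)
The plan is to exploit the fact that logarithmic utility linearizes the optimization. Setting $\U(w)=\log w$ and $r_t=0$ in the discounted wealth representation (\ref{2.53}), for any admissible $\pi$ one has
\begin{equation*}
\log \widetilde{V}_T^{\pi}=\log x + \int_0^T\left(\theta^i_s \pis \sigs - \half \pis^2 \sigs^2\right)ds + \int_0^T \pis \sigs\, dB^i_s.
\end{equation*}
The essential feature is that the drift integrand depends on the control only through its value at time $s$, so the problem decouples across time, while the final term is a stochastic integral. First I would show that for every admissible $\pi$ this stochastic integral is a genuine $\calF^i$--martingale, so that it vanishes in expectation; taking $\E[\cdot]$ then leaves
\begin{equation*}
\E[\log \widetilde{V}_T^{\pi}]=\log x + \E\int_0^T\left(\theta^i_s \pis \sigs - \half \pis^2 \sigs^2\right)ds.
\end{equation*}

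To obtain part $(1)$, I would maximize the drift integrand pointwise in the control. The map $g(\pi)=\theta^i_s \pi \sigs-\half \pi^2 \sigs^2$ is strictly concave with $g'(\pi)=\sigs\theta^i_s-\pi\sigs^2$, so its unique maximizer is $\pis^{*,i}=\theta^i_s/\sigs=\mu^i_s/\sigs^2$, matching the asserted formula. I would then confirm this candidate lies in $\calA^i(x)$: it is $\calF^i$--predictable because $\theta^i$ is $\calF^i$--adapted, it is $S$--integrable under the standing square--integrability hypothesis on the Sharpe ratio, and $\widetilde{V}^{\pi^{*,i}}_t>0$ automatically from the exponential form of (\ref{2.53}). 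Substituting $g(\pis^{*,i})=\half(\theta^i_s)^2$ back then yields part $(2)$, $u^i(x)=\log x+\half\E\int_0^T(\theta^i_s)^2\,ds$.

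For the asymptotic statements $(3)$ and $(4)$, I would write $\theta^i_t=\mut/\sigt+\upsilon^i_t$ from (\ref{2.50}) with $r_t=0$ and expand the square. Taking expectations and using $\E[\upsilon^i_t]=0$ from Proposition \ref{pro2}$(0)$ kills the cross term, giving $\E[(\theta^i_t)^2]=\mut^2/\sigt^2+\E[\upsilon^i_t]^2$. Integrating over $[0,T]$ and inserting the cumulative--variance asymptotic $\int_0^T\E[\upsilon^i_t]^2\,dt\simeq\frac{\lambda}{2}(1-p)(1+(-1)^{i+1}p)T$ from Proposition \ref{pro2}$(4)$ produces part $(3)$. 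Part $(4)$ then follows by subtracting $u^0_\infty$ from $u^1_\infty$: the shared deterministic term $\half\int_0^T\mut^2/\sigt^2\,dt$ cancels, and the excess equals one half of the asymptotic excess cumulative variance recorded in Proposition \ref{pro2}$(5)$, namely $\frac{\lambda}{2}p(1-p)T$.

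The step I expect to be the main obstacle is the verification that pointwise maximization of the drift integrand actually solves the global problem over $\calA^i(x)$ --- equivalently, that the supremum may be taken inside the time integral. The clean route is a verification argument: because $g$ is strictly concave and the stochastic integral contributes zero expectation for every admissible competitor, the pointwise maximizer dominates all other admissible portfolios. The delicate point is ensuring the stochastic term is a true martingale, and not merely a local martingale, for the relevant portfolios, so that no admissible strategy can extract value from it; this is exactly where the square--integrability of $\theta^i$ (hence of $\pis^{*,i}\sigs$) and the admissibility constraints do the work.
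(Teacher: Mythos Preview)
Your proposal is correct. The paper does not supply its own proof of this theorem---it is quoted verbatim as Guasoni (2006), Theorem~3.1---so there is no direct proof to compare against. The paper's own arguments appear only for the power-utility generalization (Propositions~\ref{pro2.22a}--\ref{th2.22a} and Theorem~\ref{th2.22b}), and there the method differs from yours: the paper treats $H^i(T)$ as Gaussian, invokes the moment-generating identity $\E[e^{\gamma H^i(T)}]=\exp\{\gamma\,\E[H^i(T)]+\half\gamma^2\,\V[H^i(T)]\}$, and then maximizes the resulting exponent in $\pi$; the logarithmic statements are recovered only as the limit $\gamma\to 0$ in Theorem~\ref{th2.22b}(5). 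Your direct route---taking logs, using the martingale property to kill the stochastic integral in expectation, and maximizing the strictly concave drift integrand pointwise---is the classical argument for logarithmic utility and is cleaner for this special case; it also sidesteps the Gaussianity assumption on $H^i$ that the paper's MGF step relies on. For parts~$(3)$ and~$(4)$ both approaches reduce identically to the asymptotics recorded in Proposition~\ref{pro2}.
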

\begin{remark}
The optimal portfolio for each investor, relative to its filtration, is its Sharpe ratio divided by the common volatility of the driving Brownian motion. It is random, being depended on the random processes $\upsilon^i_t$. The expected optimal portfolio is the deterministic ratio of the mean return ( $ \mut$) and the variance ($\sigma_t^2$) of the stock. \\
From (2), we observed that optimal utility is directly related to the natural logarithm of the initial investment $x$.
In (3), the asymptotic expected utility for each investor depends on $p=\sqrt(1-q^2)$ and hence on $q^2$, the proportion of mispricing in the stock price. It therefore follows 
 in (4), that the excess asymptotic utility which is a function of $p$, also depends on $q^2$, the proportion of mispricing in the stock price.
\end{remark}
We now give an analogous result to Guasoni (2006), starting with the optimal portfolio. 
\begin{prop}\label{pro2.22a}
Let $i\in\{0,1\}$, and $\pi^{*,\,i}$ the optimal portfolio for the $i$-th investor that solves (\ref{2.63}), where utility is the basis power function $ w^\gamma$ with RRA $1-\gamma$, and the risk--free interest rate is $r=0$. \\
$(1)$\hspace{.1in} The optimal portfolio for the $i$--th investor is:
\begin{equation}\label{2.64}
\pit^{*,\,i}=\frac{{\theta_\gamma}^i(t)}{\sigma_t}=\frac{\mut^i}{(1-\gamma)\sigma^2_t}=\frac{\mut+\upsilon^i_t \sigma_t}{(1-\gamma)\sigma^2_t}, \;\;\;t\in[0,T],
\end{equation} 
where $\theta_\gamma^i= \frac{\theta^i}{1-\gamma}$ is the risk-adjusted Sharpe ratio of the stock owned by the $i$--th investor.
\end{prop}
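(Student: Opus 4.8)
The plan is to reduce the maximization of $\E[\U(\widetilde{V}_t^{\pi})]$ to a pointwise quadratic optimization, exactly as in the logarithmic case of Theorem \ref{th2.4} but retaining the curvature contributed by $\gamma$. Writing $\phi_s:=\pis^i\sigs$ for the risk exposure of wealth, Theorem \ref{th2.3} gives $\U(\widetilde{V}_t^{\pi})=x^{\gamma}e^{\gamma H^i(t)}$, so the objective factors as $x^{\gamma}\,\E[e^{\gamma H^i(T)}]$ and it suffices to maximize $\E[e^{\gamma H^i(T)}]$ over $\pi\in\calA^i(x)$.

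The first step is to split $\gamma H^i(T)$ into a finite--variation drift and the exponent of a stochastic exponential. Adding and subtracting $\tfrac{\gamma^2}{2}\int_0^T\phi_s^2\,ds$ inside $\gamma H^i(T)$ yields
\begin{equation*}
\gamma H^i(T)=\int_0^T\left(\gamma\theta^i_s\phi_s-\tfrac{\gamma(1-\gamma)}{2}\phi_s^2\right)ds+\left(\gamma\int_0^T\phi_s\,dB^i_s-\tfrac{\gamma^2}{2}\int_0^T\phi_s^2\,ds\right),
\end{equation*}
so that $e^{\gamma H^i(T)}=e^{A(\pi)}Z_T$, where $A(\pi)=\int_0^T(\gamma\theta^i_s\phi_s-\tfrac{\gamma(1-\gamma)}{2}\phi_s^2)\,ds$ is adapted of finite variation and $Z_t=\exp(\gamma\int_0^t\phi_s\,dB^i_s-\tfrac{\gamma^2}{2}\int_0^t\phi_s^2\,ds)$ is a Dol\'eans--Dade exponential, hence a local martingale with $Z_0=1$. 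Next I would optimize the integrand of $A(\pi)$ pointwise in $s$: for fixed $s$ the map $g(\phi)=\gamma\theta^i_s\phi-\tfrac{\gamma(1-\gamma)}{2}\phi^2$ is strictly concave when $0<\gamma<1$, since $g''=-\gamma(1-\gamma)<0$, and its first--order condition $\gamma\theta^i_s-\gamma(1-\gamma)\phi=0$ gives $\phi_s^{*}=\theta^i_s/(1-\gamma)$, that is
\begin{equation*}
\pit^{*,\,i}=\frac{\phi_t^{*}}{\sigma_t}=\frac{\theta^i_t}{(1-\gamma)\sigma_t}=\frac{\mut^i}{(1-\gamma)\sigma^2_t}=\frac{\mut+\upsilon^i_t\sigma_t}{(1-\gamma)\sigma^2_t},
\end{equation*}
which is the asserted formula with $\theta_\gamma^i=\theta^i/(1-\gamma)$. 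Substituting $\phi^{*}$ back collapses the drift to $A^{*}=\tfrac{\gamma}{2(1-\gamma)}\int_0^T(\theta^i_s)^2\,ds$, the quantity that will drive the value function in the following theorem.

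The genuine obstacle is optimality, not the algebra. Because $Z_T$ is built from the same control, pointwise maximization of $A(\pi)$ need not maximize $\E[e^{A(\pi)}Z_T]$; moreover $\theta^i$ is random and mean--reverting (through $\upsilon^i$) and correlated with $B^i$, which for power utility generically generates an intertemporal hedging demand, so the myopic candidate above is not automatically the global maximizer. Closing this requires (i) a Novikov or Kazamaki condition making $Z$ a true martingale with $\E[Z_T]=1$, and (ii) a verification/convex--duality argument, passing to $\mathbb{Q}$ via $d\mathbb{Q}/d\mathbb{P}=Z_T^{*}$ in the spirit of Karatzas and Pikovsky (1996) and Amendinger et al.\ (1998), to confirm the first--order control is optimal in $\calA^i(x)$. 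This step is clean in the logarithmic limit $\gamma\to0$ treated by Guasoni (2006), where the hedging term vanishes and myopic investment is exactly optimal; for $\gamma\neq0$ it is the delicate point and the natural place to concentrate the rigor, which is why I would present the completion--of--squares derivation first and then isolate the martingale/duality verification as the substantive lemma.
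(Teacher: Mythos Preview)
Your completion-of-squares derivation lands on the same candidate as the paper, but the paper reaches the expectation by a different device: instead of factoring out a Dol\'eans--Dade exponential $Z_T$, it treats $H^i(T)$ as Gaussian and invokes the moment-generating identity $\E[e^{\gamma H^i(T)}]=\exp\bigl(\gamma\,\E[H^i(T)]+\tfrac12\gamma^2\,\V[H^i(T)]\bigr)$, with $\E[H^i(T)]$ and $\V[H^i(T)]=\int_0^T(\pis^i\sigs)^2\,ds$ supplied by It\^o isometry. It then maximizes the resulting deterministic exponent pointwise in $s$ and reads off $\pit^{*,i}$ exactly as you do. What your decomposition buys is an honest accounting of where the argument is formal: the Gaussian MGF step tacitly needs $\pi$ deterministic, yet the maximizer it returns is random through $\upsilon^i$, so the identity is being applied outside its hypothesis; equivalently, the paper never verifies $\E[Z_T]=1$ or rules out strategies carrying an intertemporal hedging component. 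The verification lemma you anticipate is therefore not supplied in the paper either---it simply asserts the MGF identity and proceeds---so your proposal is at least as complete as the paper's argument, and your explicit isolation of the martingale/duality step as the substantive gap is well placed.
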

\begin{proof} \textbf{\{Proof of Proposition \ref{pro2.22a}\}}\\
From Theorem \ref{th2.3}, 
$ H^{i}(t)= \half \int_0^t( 2 \theta_s^i \pis^i \sigma_s - ( \pis^i \sigma_s)^2 )ds+ \int_0^t \pis^i 
\sigma_s dB^i_s. $  
 Therefore, \\$ \E (H^{i}(t)) = \half \E \int_0^t( 2 \theta_s^i \pis^i \sigma_s - ( \pis^i \sigma_s)^2 )ds $, 
 and by Ito's isometry, its  variance is
  $ \V (H^{i}(t)) = \E (\int_0^t \pis^i \sigma_s dB^i_s)^2 = \int_0^t (\pis^i \sigma_s)^2 ds. $ 
 Whence, $\E \U({\widetilde{V}^i_t})= \E e^{\gamma\,\log{x} + \gamma\,H^{i}(t)}\\= e^{\gamma\,\log{x} + \E(\gamma\,H^{i}(t)) + \half \V (\gamma\,H^{i}(t))} = e^{\gamma\,\log{x} + \gamma \, \E (H^{i}(t)) + \half \gamma^2\,\V (H^{i}(t))} $. Since the exp function is increasing,
  $  u^i(x)= \sup e^{\gamma\,\log{x} + \gamma \, \E (H^{i}(t)) + \half \gamma^2\,\V (H^{i}(t))}=e^{ \sup (\gamma\,\log{x} + \gamma \, \E (H^{i}(t)) + \half \gamma^2\,\V (H^{i}(t)))} $. Thus  
  $ \pit^{*,\,i}$ $= \arg \sup (\gamma\,\log{x} + \gamma \, \E (H^{i}(t)) + \half \gamma^2\,\V (H^{i}(t)))= \arg \max           [ \gamma\,\log{x} + \gamma \E (\half \int_0^t( 2 \theta_s^i \pis^i \sigma_s - ( \pis^i \sigma_s)^2 )ds) +   \half \gamma^2\, \int_0^t (\pis^i \sigma_s)^2 ds] = \arg \max [\gamma (\half \E \int_0^t( 2 \theta_s^i \pis^i \sigma_s - ( \pis^i \sigma_s)^2 )ds) + \half \gamma^2\, \int_0^t (\pis^i \sigma_s)^2ds]=
  \arg \max \half\, \gamma\,[\E \int_0^t( 2 \theta_s^i \pis^i \sigma_s - ( \pis^i \sigma_s)^2 )ds +  \gamma\, \int_0^t (\pis^i \sigma_s)^2 ds] 
 \\ =  \arg \max [\E \int_0^t (2 \theta_s^i\,\pis^i \sigma_s - (1-\gamma)\, \int_0^t (\pis^i \sigma_s)^2 ds]= \arg \max [\E \int_0^t (2 \theta_s^i\,\pis^i \sigma_s - (1-\gamma)\,(\pis^i \sigma_s)^2)ds]\\ = \arg \max (1-\gamma)\E[\int_0^t (2 \frac{\theta_s^i}{1-\gamma}\,\pis^i \sigma_s -(\pis^i \sigma_s)^2)ds] = \arg \max \E[ \int_0^t(\frac{\theta_s^i}{1-\gamma})^2ds -\int_0^t ( \frac{\theta_s^i}{1-\gamma} - \pis^i \sigma_s )^2 ds ] .$
  Thus optimal is  achieved iff $\frac{\theta_s^i}{1-\gamma} - \pis^i \sigma_s =0$. Therefore
 \begin{equation*}
\pit^{*,\,i}=\frac{{\theta_\gamma}^i(t)}{\sigma_t}=\frac{\mut^i}{(1-\gamma)\sigma^2_t}=\frac{\mut+\upsilon^i_t \sigma_t}{(1-\gamma)\sigma^2_t}, \;\;\;t\in[0,T],
\end{equation*} 
where $\theta_\gamma^i= \frac{\theta^i}{1-\gamma}$ is the risk-adjusted Sharpe ratio of the stock for the $i$--th investor. 
It also follows that the value function for the $i$--th investor is 
\begin{equation*}
 u^i(x)=  e^{\gamma\,\log{x} + \half \gamma \,(1-\gamma) \E \int_0^t(\frac{\theta_s^i}{1-\gamma})^2ds }= e^{\gamma\,\log{x} + \half \,\frac{\gamma}{(1-\gamma)}\E \int_0^t(\theta_s^i)^2ds}. 
 \end{equation*} 
 \qedhere
\end{proof}
\begin{remark}
(1) Under power utility the optimal portfolio for each investor, relative to is filtration, is its risk-adjusted Sharpe ratio divided by the common volatility of the driving Brownian motion. It is random, being depended on the random processes $\upsilon^i_t$. Optimal portfolios are inversely proportional to the coefficient of RRA, $ 1-\gamma$ . When the RRA approaches 1, that is as $\gamma \rightarrow0$, we recover the optimal portfolio of the logarithmic utility, as seen in Guasoni's Theorem 3.1 above. However, as the RRA approach 0, the portfolios explode, while as the RRA becomes infinite, the portfolios become zero and all funds are invested in the risk--free asset. \\
(2) For both investors, the expected optimal portfolio is \textbf{deterministic} and inversely proportional to the RRA, and is independent of the mean-reversion speed and the mispricing level $q^2$. It explodes as the RRA approaches 0, that is, when $\gamma\rightarrow0$, and becomes essentially zero when either the RRA or stock price volatility becomes very large or infinite. 
\end{remark}

\begin{prop}\label{th2.22a}
Let $i\in\{0,1\}$,
and  $u^i(x)$ be the value function that solves (\ref{2.63}), where utility is the basis power function $ w^\gamma$ with RRA $1-\gamma$.  Let $\theta_\gamma^i= \frac{\theta^i}{1-\gamma}$ be the risk-adjusted Sharpe ratio of the stock for the $i$--th investor ,  and  risk--free rate $r=0$. \\
$(1)$ \hspace{.1in} The maximum expected power utility of terminal wealth for the $i$--th investor is 
\begin{eqnarray}
u^i(x) &= & e^{\gamma\,\log{x}+ \frac{\gamma}{2(1-\gamma)}\E \int_0^T (\theta^i_t)^2dt}= e^{\gamma\,\log{x}+ \frac{\gamma(1-\gamma)}{2}\E \int_0^T ((\theta^i_\gamma)_t)^2dt}\nonumber\\
              &=& e^{\gamma\,\log{x}+ \frac{\gamma}{2(1-\gamma)} \E \int_0^T \left(\frac{\mut+\upsilon^i_t \sigma_t}{\sigma_t}\right)^2dt}\label{2.65}.
\end{eqnarray}
$(2)$ \hspace{.1in} As $T\longrightarrow \infty$, the asymptotic maximum expected power utility is 
\begin{equation}\label{2.67}
u^i_{\infty}(x)\simeq e^{\gamma\,\log{x} + \frac{\gamma}{2(1-\gamma)}\left(\int_0^T \frac{\mut^2}{\sigma^2_t}\, dt + \frac{\lambda}{2}(1-p)(1+(-1)^{i+1}p)\,T\right)}.
\end{equation}
and  the (wealth) relative asymptotic maximum expected basis power utility is 
\begin{equation}\label{2.67}
\frac{u^1_{\infty}(x)}{ u^0_{\infty}(x)}= e^{\frac{\gamma}{2(1-\gamma)}\,\lambda\,p(1-p)\,T}.\\
\end{equation}
\end{prop}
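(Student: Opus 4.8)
The plan is to piggyback almost entirely on two results already in hand: the closed-form value function produced at the very end of the proof of Proposition~\ref{pro2.22a}, and the moment identities for $\upsilon^i_t$ collected in Proposition~\ref{pro2}. Little genuinely new machinery is needed; the task is organizing the algebra and one asymptotic estimate.

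For part (1), I would start from the value function already derived for the optimal portfolio, namely
\[
u^i(x) = e^{\gamma\log x + \frac{1}{2}\frac{\gamma}{1-\gamma}\,\E\int_0^T (\theta^i_s)^2\,ds},
\]
and observe that the three displayed expressions are merely algebraic rewritings of this single quantity. The substitution $\theta^i_\gamma = \theta^i/(1-\gamma)$ turns $\frac{\gamma}{1-\gamma}(\theta^i)^2$ into $\gamma(1-\gamma)(\theta^i_\gamma)^2$, and inserting the identity $\theta^i_t = \mu^i_t/\sigma_t = (\mu_t+\upsilon^i_t\sigma_t)/\sigma_t$ from (\ref{2.50}) produces the final form. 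No fresh analysis enters here.

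For part (2) the real content is the asymptotic evaluation of $\E\int_0^T(\theta^i_t)^2\,dt$. I would expand
\[
(\theta^i_t)^2 = \Bigl(\frac{\mu_t}{\sigma_t}+\upsilon^i_t\Bigr)^2 = \frac{\mu_t^2}{\sigma_t^2} + \frac{2\mu_t}{\sigma_t}\,\upsilon^i_t + (\upsilon^i_t)^2,
\]
take expectations (interchanging $\E$ and $\int_0^T$ by Fubini, justified by the Lebesgue- and square-integrability standing assumptions on the coefficients), and invoke Proposition~\ref{pro2}$(0)$, $\E[\upsilon^i_t]=0$, to annihilate the cross term since $\mu_t,\sigma_t$ are deterministic. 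This leaves
\[
\E\int_0^T (\theta^i_t)^2\,dt = \int_0^T \frac{\mu_t^2}{\sigma_t^2}\,dt + \int_0^T \E[\upsilon^i_t]^2\,dt,
\]
whereupon the estimate of Proposition~\ref{pro2}$(4)$, $\int_0^T \E[\upsilon^i_t]^2\,dt \simeq \frac{\lambda}{2}(1-p)(1+(-1)^{i+1}p)T$, feeds directly into the exponent to give the stated $u^i_\infty(x)$.

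Finally, for the wealth-relative utility I would divide $u^1_\infty(x)$ by $u^0_\infty(x)$: the common terms $\gamma\log x$ and $\frac{\gamma}{2(1-\gamma)}\int_0^T \mu_t^2/\sigma_t^2\,dt$ cancel, and pitting $i=1$ (factor $1+p$) against $i=0$ (factor $1-p$) leaves the difference $(1+p)-(1-p)=2p$, so that
\[
\frac{u^1_\infty(x)}{u^0_\infty(x)} = e^{\frac{\gamma}{2(1-\gamma)}\cdot\frac{\lambda}{2}(1-p)(2p)\,T} = e^{\frac{\gamma}{2(1-\gamma)}\,\lambda\, p(1-p)\,T}.
\]
I expect the only delicate point to be the asymptotic step, where one must be careful to carry the deterministic Sharpe integral $\int_0^T \mu_t^2/\sigma_t^2\,dt$ along unchanged while the variance term alone supplies the linear-in-$T$ growth; this is precisely the content of Proposition~\ref{pro2}$(4)$, and everything else reduces to bookkeeping and the vanishing of the cross term.
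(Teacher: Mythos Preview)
Your proposal is correct and follows essentially the same route as the paper's own proof: both start from the closed-form value function obtained at the end of Proposition~\ref{pro2.22a}, expand $(\theta^i_t)^2=(\mu_t/\sigma_t+\upsilon^i_t)^2$ with the cross term killed by $\E[\upsilon^i_t]=0$, and then feed in the asymptotic $\int_0^T\E[\upsilon^i_t]^2\,dt\simeq\frac{\lambda}{2}(1-p)(1+(-1)^{i+1}p)T$ from Proposition~\ref{pro2}. Your write-up is in fact slightly more explicit than the paper's (you flag the Fubini step and the role of deterministic $\mu_t,\sigma_t$ in eliminating the cross term), but there is no substantive difference in strategy.
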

\begin{proof} \textbf{ \{Proof of Proposition \ref{th2.22a}\}}\\
$(1)$
From Proposition \ref{pro2.22a}, the optimal portfolio is $\pit^{*,\,i}=\frac{\mut^i}{(1-\gamma)\sigma^2_t} =\frac{\mut+\upsilon^i_t \sigma_t}{(1-\gamma)\sigma^2_t}. $ Thus for the $i$--th investor, the value function or maximum expected power utility is $ u^i(x)= $
\begin{equation*}
 e^{\gamma\,\log{x} + \half \gamma \,(1-\gamma) \E \int_0^t(\frac{\theta_s^i}{1-\gamma})^2ds }= e^{\gamma\,\log{x} + \half \,\frac{\gamma}{(1-\gamma)}\E \int_0^t(\theta_s^i)^2ds} = e^{\gamma\,\log{x}+ \frac{\gamma}{2(1-\gamma)} \E \int_0^T \left(\frac{\mut+\upsilon^i_t \sigma_t}{\sigma_t}\right)^2dt}. 
 \end{equation*}
$(2)$. From Proposition \ref{pro2} ,  $ \int_0^T \E[\upsilon^i_t]^2 dt \rightarrow \frac{\lambda}{2}\,(1-p)(1+(-1)^{i+1}p)\,T,$ as $T\rightarrow \infty$. But $ \int_0^T \E(\frac{\mut+\upsilon^i_t \sigma_t}{\sigma_t})^2dt=  \int_0^T \E(\frac{\mut}{\sigma_t} + \upsilon^i_t)^2dt= \int_0^T(\frac{\mut}{\sigma_t})^2dt + \int_0^T \E (\upsilon^i_t)^2dt$ which converges to $ \int_0^T(\frac{\mut}{\sigma_t})^2dt + \frac{\lambda}{2}\,(1-p)(1+(-1)^{i+1}p)\,T\,\, $as $T\rightarrow \infty$.  
 Substituting this approximation into part $(1)$ yields, 
$ u^i_{\infty}(x)\simeq e^{\gamma\,\log{x} + \frac{\gamma}{2(1-\gamma)}\left(\int_0^T \frac{\mut^2}{\sigma^2_t}\, dt + \frac{\lambda}{2}(1-p)(1+(-1)^{i+1}p)\,T\right)}.$
Imposing Part $(3)$ of Proposition \ref{pro2} and simplifying the the quotient of the value function of the informed and the uninformed investors yield the desired result.
\end{proof}
\begin{remark}
(1) The maximum expected power utility from terminal wealth grows exponentially with the natural logarithm initial wealth $x$ adjusted by one minus the RRA.\\
(2) Observe that the wealth relative asymptotic power utility is an exponential function of the \textbf{risk-adjusted} excess asymptotic utility under logarithmic utility, which is independent of the initial wealth of the investors. This is an important unexpected result! 
 \end{remark}
 We now present our main result for the general power utility function, which is analogous to the previous theorem.
 \begin{thm}[\textbf{Main}]\label{th2.22b}
Let $i\in\{0,1\}$,  $u^i(x)$ be the value function, and $\pi^{*,\,i}$ the optimal portfolio for the $i$-th investor that solves (\ref{2.63}), where utility is assumed to be general power utility function $ \U_\gamma(w)=\frac{w^\gamma -1}{\gamma}, \,\, \gamma <1   $ and the risk--free interest rate is $r=0$. Define the log-linear value function $\psi^{i}(x)= \log(1+\gamma\,u^{i}(x))$, where $x$ is the initial wealth and $1-\gamma$ is the relative risk aversion.  Let $\theta_\gamma^i= \frac{\theta^i}{1-\gamma}$ be the risk-adjusted Sharpe ratio of the stock for the $i$--th investor ,  and  risk--free rate $r=0$. \\
$(1)$ \hspace{.1in}  The optimal portfolios are given by Proposition \ref{pro2.22a}.\\
$(2)$ \hspace{.1in} The log-linear maximum expected utility of terminal wealth is 
\begin{eqnarray}
\psi^i(x)&=& \gamma\,\log{x}+ \frac{\gamma}{2(1-\gamma)}\E \int_0^T (\theta^i_t)^2dt = \gamma\,\log{x}+ \frac{\gamma(1-\gamma)}{2}\E \int_0^T (\theta_\gamma^i)_t^2dt\nonumber\\
   &=& \gamma\,\log{x}+ \frac{\gamma}{2(1-\gamma)} \E \int_0^T \left(\frac{\mut+\upsilon^i_t \sigma_t}{\sigma_t}\right)^2dt.\label{2.65}
\end{eqnarray}
$(3)$\hspace{.1in} As $T\longrightarrow \infty$, the asymptotic log-linear maximum expected  power utility is 
\begin{equation}\label{2.67}
\psi^i_{\infty}(x)=\gamma\,\log{x} + \frac{\gamma}{2(1-\gamma)}\left( \int_0^T \frac{\mut^2}{\sigma^2_t}\, dt + \frac{\lambda}{2}(1-p)
(1+(-1)^{i+1}p)\,T\right)\\
\end{equation}
$(4)$\hspace{.1in}
The excess asymptotic  log-linear maximum expected utility is 
\begin{equation}\label{2.68}
\psi^1_{\infty}(x)- \psi^0_{\infty}(x) \simeq \frac{\gamma}{2(1-\gamma)}\lambda\,p(1-p)\,T, 
\end{equation}
which vanishes as the RRA  tends to 1. \\
$(5)$ Moreover, the excess asymptotic maximum expected utility is 
\begin{equation}\label{2.68a}
u^1_{\infty}(x)- u^0_{\infty}(x) \simeq \frac{1}{2}\lambda\,p(1-p)\,T, 
\end{equation}
\end{thm}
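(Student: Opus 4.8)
The plan is to derive all five parts as corollaries of the two basis-power-utility results already in hand, Proposition \ref{pro2.22a} (optimal portfolio) and Proposition \ref{th2.22a} (value function for the utility $w^\gamma$), by observing that $\U_\gamma(w)=\frac{w^\gamma-1}{\gamma}$ is nothing but a strictly increasing affine image of $w^\gamma$. For part (1) this is immediate: since $\U_\gamma(w)=\frac1\gamma w^\gamma-\frac1\gamma$, maximizing $\E[\U_\gamma(\widetilde{V}_T^{\pi})]$ has the same maximizer as the basis problem (when $\gamma<0$ the slope $1/\gamma$ is negative, but this flips $\sup$ into $\inf$ in exactly the way the sign of the coefficient $\half\gamma(1-\gamma)$ flips the completion-of-square step in Proposition \ref{pro2.22a}, so the same $\pi^{*,i}$ results). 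Hence the optimal portfolio is unchanged and part (1) holds verbatim.

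The crux is part (2), and I would prove it by a single substitution. Evaluating the value function at the optimizer, $u^i(x)=\E[\U_\gamma(\widetilde{V}_T^{\pi^{*,i}})]=\frac1\gamma\bigl(\E[(\widetilde{V}_T^{\pi^{*,i}})^\gamma]-1\bigr)$. By Proposition \ref{th2.22a}, $\E[(\widetilde{V}_T^{\pi^{*,i}})^\gamma]=\exp\bigl(\gamma\log x+\frac{\gamma}{2(1-\gamma)}\E\int_0^T(\theta^i_t)^2\,dt\bigr)$, so that $1+\gamma u^i(x)$ equals precisely this exponential. The log-linear map $\psi^i(x)=\log(1+\gamma u^i(x))$ is thus engineered to invert the affine-plus-exponential structure and collapse it to $\psi^i(x)=\gamma\log x+\frac{\gamma}{2(1-\gamma)}\E\int_0^T(\theta^i_t)^2\,dt$. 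The two equivalent forms in the statement then follow from $\theta^i_\gamma=\theta^i/(1-\gamma)$ and $\theta^i_t=(\mu_t+\upsilon^i_t\sigma_t)/\sigma_t$.

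Parts (3) and (4) are then routine asymptotics. I would expand $(\theta^i_t)^2=(\mu_t/\sigma_t+\upsilon^i_t)^2$ and take expectations; the cross term vanishes because $\E[\upsilon^i_t]=0$ by Proposition \ref{pro2}(0), leaving $\E\int_0^T(\theta^i_t)^2\,dt=\int_0^T \mu_t^2/\sigma_t^2\,dt+\int_0^T\E[\upsilon^i_t]^2\,dt$. Feeding in $\int_0^T\E[\upsilon^i_t]^2\,dt\simeq\frac\lambda2(1-p)(1+(-1)^{i+1}p)\,T$ from Proposition \ref{pro2}(4) gives part (3). Subtracting $\psi^0_\infty$ from $\psi^1_\infty$ cancels the $x$- and $\mu^2/\sigma^2$-terms, and the excess-variance estimate of Proposition \ref{pro2}(5), $\int_0^T\E[\upsilon^1_t]^2-\int_0^T\E[\upsilon^0_t]^2\simeq\lambda p(1-p)\,T$, yields part (4); the vanishing as the RRA tends to $1$ is clear since the prefactor $\frac{\gamma}{2(1-\gamma)}\to0$ as $\gamma\to0$.

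The hard part will be part (5), which demands a careful reading of ``$\simeq$''. Writing $\psi^i=\gamma\phi^i$ with $\phi^i=\log x+\frac{1}{2(1-\gamma)}\E\int_0^T(\theta^i_t)^2\,dt$, part (2) gives $u^i(x)=\frac{e^{\gamma\phi^i}-1}{\gamma}$, which for fixed $\gamma\neq0$ grows exponentially in $T$; so the difference $u^1_\infty-u^0_\infty$ cannot literally be the linear-in-$T$ quantity claimed unless one passes to the limit RRA$\,\to1$, i.e. $\gamma\to0$. In that limit $\U_\gamma(w)\to\log w$, $u^i(x)=\frac{e^{\gamma\phi^i}-1}{\gamma}\to\phi^i$ recovers Guasoni's logarithmic value function, and $u^1_\infty-u^0_\infty\to(\phi^1-\phi^0)\big|_{\gamma=0}=\frac{1}{2(1-\gamma)}\lambda p(1-p)\,T\big|_{\gamma=0}=\frac12\lambda p(1-p)\,T$, which is exactly Guasoni (2006), Theorem \ref{th2.4}(4). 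I would therefore present part (5) explicitly as the $\gamma\to0$ (logarithmic) specialization, equivalently the leading-order linearization of $\frac{e^{\gamma\phi}-1}{\gamma}$, and flag that this is the single step where the approximation symbol must be interpreted in the limiting sense rather than for fixed $\gamma$.
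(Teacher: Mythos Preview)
Your proposal is correct and follows essentially the same route as the paper's own proof: parts (1)--(4) are obtained exactly as you describe, by reading off the affine relation $\U_\gamma(w)=\frac{1}{\gamma}w^\gamma-\frac{1}{\gamma}$, invoking Propositions \ref{pro2.22a} and \ref{th2.22a}, and then applying the variance asymptotics of Proposition \ref{pro2}; and for part (5) the paper, like you, passes to the limit $\gamma\to 0$ (computing $\lim_{\gamma\to 0}\psi^i(x)/\gamma$ two ways) rather than claiming the identity for fixed $\gamma$. Your explicit flagging of why the linear-in-$T$ statement of part (5) can only hold in the $\gamma\to 0$ sense, and your handling of the sign reversal for $\gamma<0$ in part (1), are points the paper leaves implicit.
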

\begin{proof} \textbf{ \{Proof of Main Theorem \}} \\
$(1)$\, Since the utility function is a linear transformation of that the utility function used in the proof of  Proposition \ref{pro2.22a}, the optimal portfolio is identical for each respective investor, and the proof is the same.\\
$(2)$\, From Proposition \ref{pro2.22a}, the value function for the general power utility function is 
$u^i(x) =  \frac{e^{\gamma\,\log{x}+ \frac{\gamma}{2(1-\gamma)}\E \int_0^T (\theta^i_t)^2dt} - 1 }{ \gamma} = \frac{e^{\gamma\,\log{x}+ \frac{\gamma(1-\gamma)}{2)}\E \int_0^T (\theta_\gamma^i)_t^2dt} - 1 }{ \gamma}.$\\ 
Rearranging, the log-linear value function for each investor is \\
$ \psi^i(x)= \log(1+\gamma\,u^i(x))= \gamma\,\log{x}+ \frac{\gamma}{2(1-\gamma)}\E \int_0^T (\theta^i_t)^2 dt = \gamma\,\log{x}+ \frac{\gamma}{2(1-\gamma)} \E \int_0^T \left(\frac{\mut+\upsilon^i_t \sigma_t}{\sigma_t}\right)^2dt.$\\
$(3)$ Fix $\gamma$, and let $T\rightarrow\infty$. Then $ \E \int_0^T \left(\frac{\mut+\upsilon^i_t \sigma_t}{\sigma_t}\right)^2dt \rightarrow  \int_0^T \frac{\mut^2}{\sigma^2_t}\, dt + \frac{\lambda}{2}(1-p)
(1+(-1)^{i+1}p)\,T,$
whence $\psi^i(x)\rightarrow \gamma\,\log{x} + \frac{\gamma}{2(1-\gamma)}\left( \int_0^T \frac{\mut^2}{\sigma^2_t}\, dt + \frac{\lambda}{2}(1-p)
(1+(-1)^{i+1}p)\,T\right). $ \\
$(4)$\, The excess asymptotic log-linear maximum expected power utility follows  from$(3)$.\\
$(5)$\, Let $\gamma\rightarrow0$. From $(2)$ $\frac{\psi^i(x)}{\gamma}=\lim_{\gamma\rightarrow0} \log{x} + \frac{1}{2(1-\gamma)}\left( \int_0^T \frac{\mut^2}{\sigma^2_t}\, dt + \frac{\lambda}{2}(1-p)
(1+(-1)^{i+1}p)\,T\right)\\= \log{x} + \frac{1}{2}\left( \int_0^T \frac{\mut^2}{\sigma^2_t}\, dt + \frac{\lambda}{2}(1-p)
(1+(-1)^{i+1}p)\,T\right).$
 But we also have
 $\lim_{\gamma\rightarrow0}\frac{\psi^i(x)}{\gamma}=\lim_{\gamma\rightarrow0}\frac{\log(1+\gamma\,u^i(x))}{\gamma}=u^i(x)$, which yields the required result by differencing.
\end{proof}

\section{Conclusion}
We extend the theory of asymmetric information in mispricing models of
stocks with prices that follow geometric Brownian motion to investors
having CRRA preferences. We obtain optimal portfolios and utilities,
including asymptotic utilities for informed and uninformed investors
under the general power log--linear power utility. Our results contain
those of Guasoni (2006) as a special case of the relative risk
aversion approaching one.


\section*{References}

Amendinger, J., P.~Imkeller, and M.~Schweizer (1998).
\newblock Additional logarithmic utility of an insider.
\newblock {\em Stochastic Processes and their Application\/}~{\em {\bf 75}},
263--286.

Applebaum, D. (2004).
\newblock {\em L\'{e}vy {P}rocesses and {S}tochastic {C}alculus}.
\newblock Cambridge University Press.

Barberis, N., A.~Shleifer, and R.~Vishny (1998).
\newblock A model of investor sentiment.
\newblock {\em Journal of Financial Economics\/}~{\em {\bf 49}}, 307--43.

Brunnermeier, M.~K. (2001).
\newblock {\em Asset Pricing Under Asymmetric Information: Bubbles, Crashes,
  Technical Analysis and Herding}.
\newblock Oxford University Press.

Buckley, W. (2009).
\newblock {\em Asymmetric information in fads models in L\'{e}vy markets}.
\newblock Ph.\ D. thesis, Florida Atlantic University.

Campbell, J.~Y. and A.~S. Kyle (1993).
\newblock Smart {M}oney, {N}oise {T}rading and {S}tock {P}rice {B}ehaviour.
\newblock {\em Review of Economic Studies\/}~{\em {\bf 60}\/}(1), 1--34.

Campbell, J.~Y. and R.~J. Shiller (1987).
\newblock Cointegration and {T}ests of {P}resent {V}alue {M}odels.
\newblock {\em Journal of Political Economy\/}~{\em {\bf 95}\/}(5), 1062--1088.

Cheridito, P. (2003).
\newblock Representation of {G}aussian measures that are equivalent to {W}iener
  measure.
\newblock {\em Seminar de Probabilities\/}~{\em {\bf XXXVII}}, 81--89.
\newblock Springer Lecture Notes in Mathematics 1832.

Cutler, D., J.~Poterba, and L.~Summers (1990).
\newblock Speculative dynamics and the role of feedback traders.
\newblock {\em American Economic Review\/}~{\em {\bf 80}}, 63--68.

Daniel, K., D.~Hirshleifer, and A.~Subrahmanyam (1998).
\newblock Investor {P}sychology and {S}ecurity {M}arket {U}nder- and
  {O}verreactions.
\newblock {\em Journal of Finance\/}~{\em {\bf 53}}, 1839--85.

{De Long}, J.~B., A.~Shleifer, L.~H. Summers, and R.~J. Waldmann (1990).
\newblock Noise {T}rader {R}isk in {F}inancial {M}arkets.
\newblock {\em Journal of Political Economy\/}~{\em {\bf 98}}, 703--73.

Fama, E. (1970).
\newblock Efficient {C}apital {M}arkets: {A} {R}eview of {T}heory and
  {E}mpirical {W}ork.
\newblock {\em Journal of Finance\/}~{\em {\bf 25}}, 383--417.

Flavin, M.~A. (1983).
\newblock Excess {V}olatility in the {F}inancial {M}arkets: A {R}eassessment of
  the {E}mpirical {E}vidence.
\newblock {\em Journal of Political Economy\/}~{\em {\bf 91}\/}(6), 929--956.

Guasoni, P. (2006).
\newblock Asymmetric information in fads models.
\newblock {\em Finance and Stochastics\/}~{\em {\bf 10}}, 159--177.

Hitsuda, M. (1968).
\newblock Representation of {G}aussian processes equivalent to {W}iener
  process.
\newblock {\em Osaka Journal of Mathematics\/}~{\em {\bf 5}}, 299--312.

Imkeller, P. and S.~Ankirchner (2006, August).
\newblock Financial markets with asymmetric information, information drift,
  additional utility and entropy.
\newblock Technical report, Humboldt University.

Karatzas, I. and I.~Pikovsky (1996).
\newblock Anticipative portfolio optimization.
\newblock {\em Advances in Applied Probability\/}~{\em {\bf 28}}, 1095--1122.

Karatzas, I. and S.~Shreve (1991).
\newblock Martingale and duality methods for utility maximization in an
  incomplete market.
\newblock {\em SIAM Journal on Control and Optimization\/}~{\em {\bf 29}\/}(3),
  702--730.

Karatzas, I. and S.~Shreve (1999).
\newblock {\em Brownian Motion and Stochastic Calculus\/} (2nd ed.).
\newblock Springer.

Kleidon, A.~W. (1986).
\newblock Variance {B}ounds {T}ests and {S}tock {P}rice {V}aluation {M}odels.
\newblock {\em Journal of Political Economy\/}~{\em {\bf 94}}, 953--1001.

LeRoy, S.~F. and R.~D. Porter (1981).
\newblock The {P}resent-{V}alue {R}elation: {T}ests {B}ased on {I}mplied
  {V}ariance {B}ounds.
\newblock {\em Econometrica\/}~{\em {\bf 49}\/}(3), 555--574.

March, T. and R.~Merton (1986).
\newblock Dividend variability and variance bounds tests for the rationality of
  stock market prices.
\newblock {\em American Economic Review\/}~{\em {\bf 76}}, 483--98.

Merton, R.~C. (1971).
\newblock Optimum consumption and portfolio rules in a continuous-time model.
\newblock {\em Journal of Economic Theory\/}~{\em {\bf 3}}, 373--413.

Odean, T. (1998).
\newblock Volume, {V}olatility, {P}rice, and {P}rofit {W}hen {A}ll {T}raders
  {A}re {A}bove {A}verage.
\newblock {\em Journal of Finance\/}~{\em {\bf 53}}, 1887--1934.

Oksendal, B. (2005).
\newblock {\em Stochastic Integral Equations\/} (6th ed.).
\newblock Springer.

Protter, P. (2004).
\newblock {\em Stochastic Integration and Differential Equations\/} (2nd ed.).
\newblock Springer.

Shiller, R. (1981).
\newblock Do stock prices move to much to be justified by subsequent changes in
  dividends?
\newblock {\em American Economic Review\/}~{\em {\bf 71}}, 421--436.

Summers, L.~H. (1986).
\newblock Does the {S}tock {M}arket {R}ationally {R}eflect {F}undamental
  {V}alues?
\newblock {\em Journal of Finance\/}~{\em {\bf 41}}, 591--601.

Summers, L.~H. and J.~M. Poterba (1988).
\newblock Mean reversion in stock prices: {E}vidence and implications.
\newblock {\em Journal of Financial Economics\/}~{\em {\bf 21}\/}(1), 27--59.

Wang, J. (1993).
\newblock A {M}odel of {I}ntertemporal {A}sset {P}rices {U}nder {A}symmetric
  {I}nformation.
\newblock {\em Review of Economic Studies\/}~{\em {\bf 60}}, 249--282.

West, K.~D. (1988).
\newblock Dividend {I}nnovations and {S}tock {P}rice {V}olatility.
\newblock {\em Econometrica\/}~{\em {\bf 56}}, 37--61.

\appendix
\section{}
\begin{proof} \textbf{ \{Proof of Proposition 1:\}}
From equations (\ref{2.2}) and (\ref{2.3}), we have for the informed investor ($i=1$) 
$dY_t = p\,dW_t + q\,dU_t
= p\,dW_t + q\,d(-\lambda U_t dt + dB_t)= p\,dW_t + q\,dB_t -q\,\lambda U_t dt 
= dB^1_t -q\,\lambda U_t dt 
= dB^1_t +\upsilon_t^1 dt,$
where
\begin{equation}
B^1_t  \stackrel{\triangle}{=}  p\,W_t + q\,B_t,\,\,\upsilon^1_t \stackrel{\triangle}{=}-q\,\lambda U_t.\label{2.18a}
\end{equation}
Substituting (\ref{2.18a}) into (\ref{2.1}), yields
$\frac{dS_t}{S_t} = \mut dt+ \sigma_t dY_t
= \mut dt+ \sigma_t (dB^1_t +\upsilon_t^1 dt)\\= (\mut+ \upsilon^1_t \sigma_t)dt + \sigma_t dB^1_t\nonumber 
=\mut^1\,dt + \sigma_t dB^1_t,$
where
\begin{equation}\label{2.19}
\mut^1\stackrel{\triangle}{=}\mut+ \upsilon^1_t \sigma_t.
\end{equation}
So under $\calF^1$, the informed investor has stock price 
\begin{equation}\label{2.21}
S_t= S_0\,\exp{\left( \int_0^t(\mu^1_s - \half \sigma^2_s)ds+ \int_0^t\sigma_s dB^1_s\right)},\:\:\:t\in [0,\,T],
\end{equation}
where $B^1$ is an $\calF^1$--Brownian motion given by  (\ref{2.18a}) and  $\mut^1$ is given by (\ref{2.19}).\\ 
For the uninformed investor($i=0$), we have  $dY_t=dB^0_t + \upsilon_t^0dt  $ from Equation \ref{2.30}, where $\mut^0=\mut + \upsilon^0_t \sigma_t. $ Substituting into (\ref{2.1}) yields the desired result. 
\end{proof}
\noindent \textbf{Proof of Theorem 1}:\,
This follows below 
with $\mut^{i}, \theta_t^{i}, \pit^{i} , V_t^{i} $, etc, replacing the relevant variables,  and $B_t^i$ replacing $Y_t$.    

\begin{proof}
It is clear that 
$\frac{dV_t}{V_t}=(1-\pit)\, r_t \,dt + \pit \,\frac{dS_t}{S_t}.$
Imposing the stock price dynamic (\ref{2.1a}) on the last equation, yields
$\frac{dV_t}{V_t}$
$=(1-\pit) r_t \,dt + \pit \,(\mut dt+ \sigma_t \,dY_t )
=(r_t +\pit( \mut - r_t))\,dt + \pit \,\sigma_t\, dY_t .$
Thus, we get the percentage returns dynamic of the wealth process
\begin{equation}\label{2.40}
\frac{dV_t}{V_t}=(r_t +\pit \,\sigma_t\,\theta_t)\,dt + \pit \,\sigma_t\, dY_t ,
\end{equation}
where $\theta$, is the stock's Sharpe ratio or market price of risk. 
By the It\^{o} formula, the unique solution to (\ref{2.40}) with $V_0=x$, is the stochastic exponential
\begin{equation}\label{2.41}
V_t= x\, \exp \left(  \int_0^t r_s\, ds + \int_0^t (\pis \,\sigma_s\, \theta_s -\half \,\pis^2\, \sigma^2_s)\,ds + \int_0^t \pis \sigma_s \,dY_s  \right),
\end{equation}
with discounted wealth process $\widetilde{V}$ given by 
\begin{equation*}\label{2.42}
\widetilde{V}_t\stackrel{\triangle}{=}\exp{(-\int_0^t\, r_s \,ds)}\,V_t=x\, \exp \left( \int_0^t (\pis \,\sigma_s\, \theta_s -\half\, \pis^2\, \sigma^2_s\,)ds + \int_0^t \pis\, \sigma_s \,dY_s \right).
\end{equation*}
The power utility of discounted wealth is
$\U({\widetilde{V}_t})= e^{\gamma\,\log{\widetilde{V}_t}}= = e^{\gamma\,\log{x} + \gamma\,H(t)},$
where
$\log{\widetilde{V}_t}=\log{x} +  \int_0^t (\pis \sigma_s \theta_s -\half \pis^2 \sigma^2_s)ds + \int_0^t \pis \sigma_s dY_s 
=\log{x} + \half \int_0^t \theta^2_sds  -\half \int_0^t (\pis \sigma_s - \theta_s)^2 ds + \int_0^t \pis \sigma_s dY_s
=\log(x) + H(t).$ \qedhere
\end{proof}
\end{document}